\newcommand{\T}{\Theta}
\newcommand{\thetahat}{\hat\theta}
\newtheorem{definition}{Definition}
\newtheorem{lemma}{Lemma}
\newtheorem{remark}{Remark}
\newtheorem{assumption}{Assumption}
\newtheorem{proposition}{Proposition}
\begin{document}
%
\title{Incentive Mechanism in the Sponsored Content Market with Network Effects}
%
%
%
\author{Mina~Montazeri, Pegah~Rokhforoz, Hamed~Kebriaei,~\IEEEmembership{Senior Member,~IEEE}, and
        Olga~Fink,~\IEEEmembership{Member,~IEEE}
\thanks{Mina Montazeri and Hamed Kebriaei  are with the School of ECE, College of Engineering, University of Tehran, Tehran, Iran. Emails: (mina.montazeri@ut.ac.ir, kebriaei@ut.ac.ir). Pegah Rokhforoz is with the Chair of Intelligent Maintenance Systems, ETH Zurich, Switzerland. Email: (foroz@ibi.baug.ethz.ch). Olga Fink is with the Laboratory of Intelligent Maintenance and Operation Systems, EPFL, Switzerland. Email: (olga.fink@epfl.ch).  }
}

\maketitle
\begin{abstract}
We propose an incentive mechanism for the sponsored content provider market in which the communication of users can be represented by a graph and the private information of the users is assumed to have a continuous distribution function. 
The content provider stipulates incentive rewards to encourage users to reveal their private information truthfully and increase their content demand, which leads to an increase in the advertising revenue.
We prove that all users gain a non-negative utility and disclose their private information truthfully.
Moreover, we study the effectiveness and scalability of the proposed mechanism in a case study with different network structures. 
\end{abstract}

\begin{IEEEkeywords}
Sponsored content market, mechanism design, continuous private information,  network system.
\end{IEEEkeywords}

%
\IEEEpeerreviewmaketitle

\section{Introduction}
Smartphones facilitate people's interaction to share information with their friends, which leads to a huge cellular data flow. Due to the increasing use of cellular data services, the data cost would increase and become one of the critical concerns of the users. This increasing data cost results in decreasing the data consumption in social networks \cite{yang2017noncooperative}, content markets \cite{xiong2020contract}, crowdsourcing, and crowdsensing tasks \cite{wang2019optimization,hu2020blockchain}. For this reason, the content/service providers provide some incentives to support partially the mobile users’ data usage \cite{xiong2020contract,wang2019optimization,hu2020blockchain}. In this paper, we focus on the problem of designing incentive rewards in addressing the sponsored content market \cite{xiong2017economic}. 

{The concept of sponsored content market has been introduced in setups where the Content Provider (CP) designs an incentive mechanism by providing some incentive rewards to motivate users to consume more sponsored content  and hence maximize his revenue by displaying more advertisements \cite{xiong2017economic}.  
However, designing optimal incentive rewards in this market can be challenging, and has been the subject of numerous studies \cite{xiong2020contract,nie2018stackelberg}. Previous research studies focusing on mechanism design for sponsored contend markets had several limiting assumptions: such as the assumption that the CP has  full knowledge of the users' information  \cite{xiong2017economic} or that the users will truthfully report their private information to the CP. However, this assumption is not realistic in real-world scenarios since users have
incentive to report their information incorrectly to gain
more profit. If the users are not revealing information truthfully, it is difficult for the CP to design
an optimal incentive reward. To address this challenge, contract theory offers
a framework to design mechanisms under asymmetric information in which, the mechanism designer is not aware of the user’ private information \cite{nisanalgorithmic}. In most of the previous studies that applied contract theory to content provider market, the authors assumed that the users' private
information is drawn from a discrete distribution which is not realistic since it cannot capture a wide range of possible values \cite{zhang2017survey,xiong2020contract}. Moreover, previous research did not consider the influence of user interactions on each other. This poses again a limitation to its applicability in real world scenarios since in many social network systems, such as Facebook or Twitter, the behaviors of the users do influence each other.

To address the outlined limitations, in this paper, we propose an optimal incentive mechanism for the content provider market when the users have private information. The proposed incentive mechanism aims to achieve the following three goals: 1) Maximize the content provider's utility function; 2) Guarantee the participation of the users (individual rationality);  3) Motivate users to share their private information truthfully (incentive compatibility). 
To accurately reflect real setups in sponsored content markets, we assume that the users can influence each other's behavior through social interactions that are represented by a graph. Moreover, we assume that {the private information of the users, which is also referred to as the "type" of the users, can be modeled by a continuous distribution function. This private information 
implies the strength of social network ties between the users.}}

{We model the sponsored content market as an optimization problem in which the CP aims to obtain incentive rewards which maximize his utility and also satisfy the incentive compatibility and individual rationality constraints.} The proposed mechanism leads to a tractable constrained functional optimization which links the demand and the incentive reward function. We prove that the proposed mechanism fulfills the incentive compatibility and individual rationality properties. In addition, we evaluate the effectiveness and scalability of the proposed mechanism in a case study in which we consider different numerical examples with different network structures. In summary, the main contributions of the paper are as follows:
\begin{itemize}
    \item  We propose an incentive mechanism for users whose social interaction is represented by a graph with continuous private information.
    \item We reformulate the mechanism design problem to a tractable optimization which links the incentive reward to the demand function.
    \item We prove that the proposed mechanism fulfills the individual rationality and incentive compatibility properties.
\end{itemize}
{
\section{Related work}
Previous research studies have focused on different aspects of designing optimal incentive rewards in the content provider market.
One of the previously proposed  approaches for sponsored content market is based on (Bayesian) Stackelberg games. In \cite{xiong2017economic}, the interaction among the network operator, the content provider, and the users is formulated as a Stackelberg game. In this setup, the CP aims to design incentive rewards that maximize his utility based on the assumption that the CP knows all the information of the users. However, this assumption is not realistic in real-world scenarios. To address this limitation, \cite{nie2018stackelberg} used the Bayesian Stackelberg games to allocate incentive rewards in the sponsored content market. Bayesian Stackelberg games are particularly applicable in setups where the CP is uncertain about some parameters in the objective function of the users. Even though \cite{nie2018stackelberg} assumes that the CP has incomplete information about the agents, which is more realistic compared to previous research, the proposed method still has several unrealistic assumptions which preclude its application in real world scenarios. Firstly, the authors assume that all the users voluntarily participate in
the market. Secondly, in the Bayesian Stackelberg game, the optimal strategy of the CP is obtained by the average with respect to the distribution of the user's private information. This means that all users receive the same amount of reward which is again unrealistic since users prefer to receive a reward based on their own utility functions. 

Building upon the efforts to tackle the challenge of designing rewards based on users' utility function, several recent research studies have focused on designing an optimal incentive reward that encourages users to report their private information truthfully. These efforts have been seen in various applications, including resource/task allocation markets \cite{roughgarden2010algorithmic} and blockchain-based environments \cite{cai2020truth,cai2022truthful}.
To encourage the truthful sharing, contract theory offers a framework
to design incentive reward under asymmetric information in such a way that it motivates users to report their private information truthfully\cite{roughgarden2010algorithmic}.
Several research studies have applied contract theory to the content provider market
\cite{zhang2017survey,xiong2020contract}. However,
these studies assumed that the users' private information follows a discrete distribution function. While this assumption simplifies the incentive mechanism, it is a restrictive assumption because in many applications the private information of the users follows a continuous distribution \cite{mirrlees1997information}. The only study that did take continuous distribution into consideration, proposed an incentive mechanism for the sponsored content market where users have continuous private information \cite{montazeri2021optimal}. Yet, the study did not consider the influence of user interactions on each other. This poses again a limitation to its applicability in real world scenarios since in many social network systems, such as Facebook or Twitter the behaviors of the users do influence each other. In such systems, the users' interactions can be modeled as a graph where the nodes represent the users and the edges represent the influence strength of the users' social ties \cite{jadbabaie2019optimal}. However, the interactions between users have not been considered in any of the previous works that applied contract theory in the content provider market.}


\section{Model and Problem Formulation}
\label{Model and Problem Formulation}
{
\subsection{Model of Sponsored Content Markets}
We model the sponsored content platform as a market consisting of two entities: a CP and a set of users $\mathcal{N}=\{1,2,\dots,N\}$. 
To accurately reflect real setups in sponsored content markets, we assume that the users can influence each other's behavior through social interactions that are represented by a graph. However, based on the user's personality, the strength of the network effect for each user is different. We model the strength of the network effect for user $i$ by parameter $\theta_i\in\T_{i}$, where $\T_{i}=[\underline{\theta}, \bar{\theta}]$. This parameter is private information of each user and is  neither known to the content provider nor to the other users. To realistically depict reality, private information is modeled as a continuous variable as it offers more versatility and can encompass a broader range of values \cite{montazeri2022distributed}.

In this market, the CP first obtains the content demand $x_{i}:{\Theta}\to{\mathbb{R}^+}$ and incentive reward $R_{i}:{\Theta}\to{\mathbb{R}^+}$, where $\T=\T_{1}\times\cdots\times\T_{N}$ as function of users' private information. This incentive reward encourages users to consume more content.
After that users share their private information, denoted as $\thetahat_{i}\in\T_{i}$, with the CP such that they maximize their profit. In general, their shared private information may not necessarily be their actual private information, i.e. $\theta_i$, unless it is intrinsically optimal for them to share it truthfully. The general schematic of the sponsored content market is shown in Figure~\ref{Fig:system_model}.
}
\begin{figure}[h!]
	\centering
	\includegraphics[width=1\linewidth]{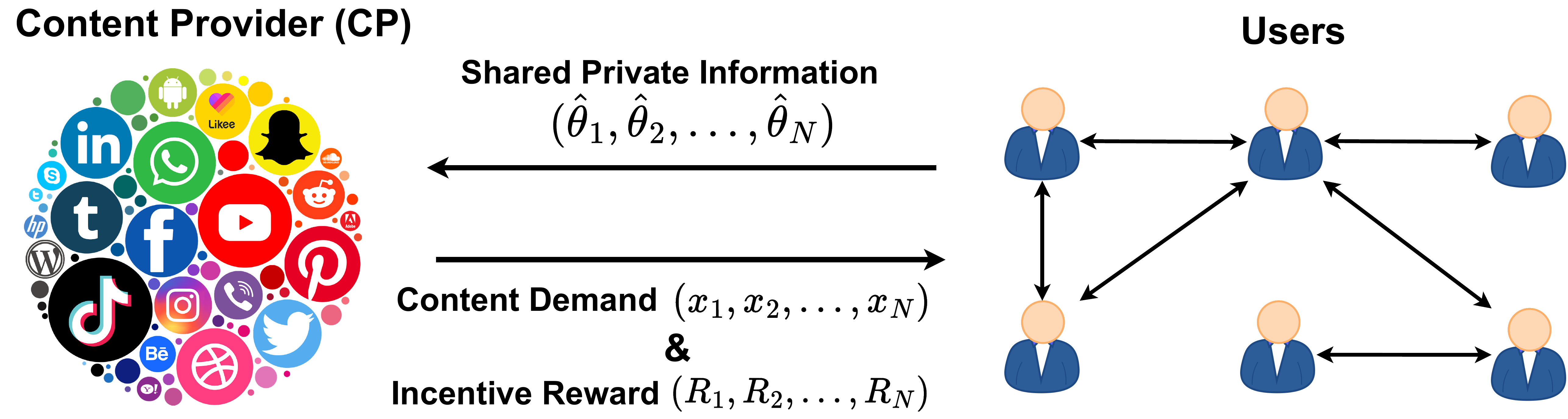}
	\caption{The schematic of sponsored content market
}
	\label{Fig:system_model}
\end{figure}

\subsection{Problem Formulation}
\label{Model}
 {
Consider agent $i$ who is embedded in a given social network and participate in the sponsored content market. In this case, 
we define the user's utility as follows \cite{xiong2019dynamic}:}
\begin{equation}
\begin{aligned}
  \label{eq:utility_NU_1}
&U_{i}(x(\thetahat),R_i(\thetahat),\theta_i)=
\psi(x_i(\thetahat))+\theta_i\sigma(x(\thetahat))-px_i(\thetahat)+R_i(\thetahat),
\end{aligned}
\end{equation}
\normalsize
where $\thetahat={\{\thetahat_{1},\cdots,\thetahat_{N}}\}$ and $x(\thetahat)={\{x_{1}(\thetahat),\cdots,x_{N}(\thetahat)}\}$ are a shared private information profile and content demand profile. These include the shared private information and content demand of each agent, respectively. 
 {The first term $\psi(x_i(\thetahat))$  represents the internal utility that user $i$ gains from consuming and enjoying the content demands. This term derives from user content consumption, independent of the consumption of user's neighbors.
Inspired by \cite{candogan2012optimal,jadbabaie2019optimal}, we model this term as a linear-quadratic function: $\psi(x_i(\thetahat)) = ax_i(\thetahat) - \frac{b}{2}x_i^2(\thetahat)$, where $a\geq 0$ models the maximum internal demand willingness rate, and $b\geq 0$ models the willingness elasticity factor \cite{candogan2012optimal}.
We model the second term, which represents the external utility due to the network effects, 
as $\sigma(x(\thetahat))= x_i(\thetahat)\sum_{j=1}^{N} g_{ij}x_j(\thetahat)$  \cite{candogan2012optimal,jadbabaie2019optimal}. In this formulation, $g_{ij} \ge 0$ represents the influence strength of the social tie of user $i$ on user $j$. Thus, the users’ behaviors in terms of content demand are influencing each other. As mentioned before, the parameter $ \theta_i$, which is the private information of user $i$, controls the strength of the network effect for user $i$ and is known neither to the CP nor to the other users. However, it can be assumed that it is commonly known that $\theta_i$ follows a distribution $F(\theta_i)$. Note that this assumption is reasonable since
the CP can estimate the statistical information about the distribution of users' private information by learning from user historical behavior or conducting a user survey. We also assume that $F(\theta_i)$ is
continuously differentiable.}
The term $p x_i(\thetahat)$ indicates the cost that user $i$ has to pay to the mobile operator for the consumption of $x_i$.

 The CP's utility comprises two parts: the total advertisement revenue gained from the user's content consumption and the total rewards paid to all the users. 
 Thus, the CP's utility can be formulated as:
\begin{align}
\label{eq:utility_CP}
U^{CP}(x(\thetahat),R(\thetahat))=\mathbf{E}_{\hat{\theta} }[\sum_{i\in\mathcal{N}}\big (Q(x_i(\thetahat))-R_i(\thetahat)\big)],
\end{align}
where $R(\cdot)={\{R_{1}(\cdot),\cdots,R_{N}(\cdot)}\}$.
The function $Q$ is the advertisement revenue
gained from data usage of user $i$ and is given by $Q(x_i(\thetahat)) = sx_i(\thetahat) - \frac{t}{2}x_i^2(\thetahat)$ in which $s, t \geq 0$ are predefined coefficients characterizing the extent of the concavity of the function \cite{candogan2012optimal}. 
Thus, the CP offers $(x(\cdot), R(\cdot))$, which provides the content demand consumption and incentive reward for users based on users' shared private information.
\subsection{Designing the Incentive Mechanism}
\label{Problem Formulation}
We address the mechanism design problem from the content provider perspective.
The goal of the CP is to design a utility-maximizing mechanism which receives shared private information from the users and determines both the incentive rewards and the demand consumption.
 Solving this problem presents two main challenges and requires the CP to ensure that the users participate in the market and share their private information truthfully. 
In the following, we provide the definition of these properties.
\begin{definition}
	\label{Def:IR}
	A mechanism is individually rational if the users gain a non-negative utility by sharing their private information truthfully, i.e.
	\begin{align}
		\label{Eq:IR_cons}
	IR_{\theta_{i}}:& \mathbf{E}_{\hat{\theta}_{-i}}	[U_{i}(x_i({\theta_i,\hat{\theta}_{-i}}),x_{-i}({\theta_i,\hat{\theta}_{-i}}),R_i({\theta_i,\hat{\theta}_{-i}}),\theta_i)] \ge 0. 
	\end{align}
	where $x_{-i}(\cdot)={\{x_{1}(\cdot),\cdots,x_{i-1}(\cdot),x_{i+1}(\cdot),\cdots,x_{N}(\cdot)}\}$ is the content demand of all users except user $i$. A similar definition holds for $\thetahat_{-i}={\{\thetahat_{1},\cdots,\thetahat_{i-1},\thetahat_{i+1},\cdots,\thetahat_{N}}\}$.
\end{definition}
\begin{definition}
	\label{Def:IC}
	A mechanism is incentive-compatible 
	if the users achieve an equal or higher utility by sharing their private information truthfully, i.e.
	\begin{align}
	\label{Eq:IC_cons}
	IC_{\theta_{i},\hat{\theta}_{i}}:& \mathbf{E}_{\hat{\theta}_{-i}}[U_{i}(x_i({\theta_i,\hat{\theta}_{-i}}),x_{-i}({\theta_i,\hat{\theta}_{-i}}),R_i({\theta_i,\hat{\theta}_{-i}}),\theta_i)] \ge\\
	\nonumber
	&\mathbf{E}_{\hat{\theta}_{-i}}[ U_{i}(x_i({\hat{\theta_i},\hat{\theta}_{-i}}),x_{-i}({\hat{\theta_i},\hat{\theta}_{-i}}),R_i({\hat{\theta_i},\hat{\theta}_{-i}}),\theta_i)].
	\end{align}
\end{definition}
%
The CP designs
the demand consumption and incentive reward functions such that the three objectives are achieved simultaneously: (i) motivate users to participate in the market, (ii) ensure that users truthfully share their private information, (iii) maximize the CP's utility. Thus, the
optimal mechanism is obtained by solving the following
optimization problem:
\begin{align}
\label{opt_1}
&\max_{\{R(\cdot), x(\cdot)\}} U^{CP}(x(\thetahat),R(\thetahat)) \quad\textnormal{s.t.} \,\,\quad IR_{\theta_{i}}, \quad IC_{\theta_{i},\hat{\theta}_{i}}.
\end{align}
Since the optimization \eqref{opt_1} is not convex and also not straightforward to solve, we propose in the following how we can reformulate it in order to find optimal decisions for the CP.
\section{Solution of the Mechanism}
\label{Sec:Mechanism}
 In this section, we use the following propositions to investigate
 the solution of the optimization problem \eqref{opt_1}.
For ease of presentation, we define:
\begin{equation}
\begin{aligned}
\label{taghyir_motoghayer}
V_i(\hat{\theta}_i)\equiv &\mathbf{E}_{\hat{\theta}_{-i}}[(a-p)x_i(\hat{\theta_i},\hat{\theta}_{-i})-\frac{b}{2}x_i^2(\hat{\theta_i},\hat{\theta}_{-i}),\\
\gamma_i(\hat{\theta}_i)\equiv& \mathbf{E}_{\hat{\theta}_{-i}}[ x_i(\hat{\theta_i},\hat{\theta}_{-i})\sum_{j\in\mathcal{N}} g_{ij}x_j(\hat{\theta_i},\hat{\theta}_{-i})],\\
r_i(\hat{\theta}_i)\equiv &\mathbf{E}_{\hat{\theta}_{-i}}[R_i(\hat{\theta_i},\hat{\theta}_{-i})],\\
C_i(\hat{\theta_i})\equiv &\mathbf{E}_{\hat{\theta}_{-i}}[sx_i(\hat{\theta_i},\hat{\theta}_{-i}) - \frac{t}{2}x_i^2(\hat{\theta_i},\hat{\theta}_{-i})],\\
\tilde{U}_{i}(\theta_i,\hat{\theta_i})&\equiv
\mathbf{E}_{\hat{\theta}_{-i}}[ U_{i}(x_i({\hat{\theta_i},\hat{\theta}_{-i}}),x_{-i}({\hat{\theta_i},\hat{\theta}_{-i}}),R_i({\hat{\theta_i},\hat{\theta}_{-i}}),\theta_i)].
\end{aligned}
\end{equation}
This results in $\tilde{U}_{i}(\theta_i,\hat{\theta}_i)=
V_i(\hat{\theta}_i)+\theta_i\gamma_i(\hat{\theta}_i)+r_i(\hat{\theta}_i)$.
\begin{proposition}
	\label{Lem1}
	In optimization \eqref{opt_1}, if the $IR_{\theta_{i}}$ constraint is satisfied for $\underline{\theta}$ and the $IC_{\theta_{i},\hat{\theta}_{i}}$ constraint is satisfied as well, for all $\hat{\theta_i}, \theta_i$, then the $IR_{\theta_{i}}$ is satisfied for all $ \theta_i $. Also, in any optimal solution, we have $\tilde{U}_{i}(\underline{\theta},\underline{\theta})=0$, i.e., the $IR_{\theta_{i}}$ constraint is active for $\underline{\theta}$.	
\end{proposition}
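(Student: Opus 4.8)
The plan is to recast everything in terms of the reduced-form payoff $\tilde U_i(\theta_i,\hat\theta_i)=V_i(\hat\theta_i)+\theta_i\gamma_i(\hat\theta_i)+r_i(\hat\theta_i)$ and the truth-telling payoff $W_i(\theta_i):=\tilde U_i(\theta_i,\theta_i)$. In this notation, $IR_{\theta_i}$ is exactly the statement $W_i(\theta_i)\ge 0$, while $IC_{\theta_i,\hat\theta_i}$ states that $W_i(\theta_i)\ge\tilde U_i(\theta_i,\hat\theta_i)$ for every report $\hat\theta_i$. The whole argument then rests on showing that $W_i$ is non-decreasing on $[\underline{\theta},\bar{\theta}]$.

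First I would verify the sign condition $\gamma_i(\hat\theta_i)\ge 0$: by definition $\gamma_i(\hat\theta_i)=\mathbf{E}_{\hat\theta_{-i}}[x_i\sum_{j\in\mathcal{N}}g_{ij}x_j]$, which is non-negative because the demands satisfy $x_k(\cdot)\ge 0$ (as $x_k:\T\to\mathbb{R}^+$) and $g_{ij}\ge 0$. Next, for any pair $\theta_i'>\theta_i$, I apply incentive compatibility to the higher type mimicking the lower one:
\begin{align*}
W_i(\theta_i')&\ge \tilde U_i(\theta_i',\theta_i)=V_i(\theta_i)+\theta_i'\gamma_i(\theta_i)+r_i(\theta_i)\\
&=W_i(\theta_i)+(\theta_i'-\theta_i)\gamma_i(\theta_i)\ge W_i(\theta_i),
\end{align*}
which establishes monotonicity.

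With monotonicity in hand, the first claim is immediate: for every $\theta_i$ we have $W_i(\theta_i)\ge W_i(\underline{\theta})=\tilde U_i(\underline{\theta},\underline{\theta})\ge 0$, where the last inequality is precisely $IR_{\underline{\theta}}$; hence $IR_{\theta_i}$ holds for all $\theta_i$. For the activeness claim I would argue by contradiction: suppose an optimal solution has $\delta:=\tilde U_i(\underline{\theta},\underline{\theta})>0$ for some $i$, and decrease the reward $R_i(\cdot)$ by the constant $\delta$ uniformly over all type profiles, leaving $x(\cdot)$ and every $R_j$, $j\neq i$, unchanged. This lowers $\tilde U_i(\theta_i,\hat\theta_i)$ by exactly $\delta$ for all $\theta_i,\hat\theta_i$, so every $IC$ inequality is preserved (both sides drop by $\delta$), and by monotonicity the new truth-telling payoff satisfies $W_i(\theta_i)-\delta\ge W_i(\underline{\theta})-\delta=0$, so $IR$ still holds for all $\theta_i$. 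Since no other agent's utility depends on $R_i$ and the demand profile is untouched, all remaining constraints are unaffected, while $\mathbf{E}_{\hat\theta}[R_i]$ drops by $\delta$ and hence $U^{CP}$ strictly increases by $\delta$ --- contradicting optimality. Therefore $\tilde U_i(\underline{\theta},\underline{\theta})=0$.

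The main obstacle is not the computation but isolating the two structural facts that drive it: that the marginal quantity $\gamma_i$ is non-negative (which pins down the direction of monotonicity and hinges on non-negative demands and non-negative network weights $g_{ij}$), and that the $IC$ constraints are invariant under a constant shift of the reward (which is what makes the downward perturbation admissible). Some care is also needed to confirm that perturbing $R_i$ for a single agent leaves every other agent's $IR$ and $IC$ constraints intact and that the shifted reward remains feasible, so that the contradiction argument can be carried out agent by agent.
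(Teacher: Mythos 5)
Your proof is correct and follows essentially the same route as the paper: the chain $\tilde U_i(\theta_i,\theta_i)\ge\tilde U_i(\theta_i,\underline{\theta})\ge\tilde U_i(\underline{\theta},\underline{\theta})\ge 0$ via $IC$ and the sign of $\gamma_i$, followed by a downward perturbation of $R_i$ to show the constraint must bind at the optimum. You are in fact somewhat more explicit than the paper, which leaves the non-negativity of $\gamma_i$ (justifying $\partial\tilde U_i/\partial\theta_i\ge 0$) and the feasibility of the reward reduction largely implicit.
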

\begin{proof}
{
Let us use the following inequality:
	\begin{align}
	\label{Eq:IR_pr3}
	\tilde{U}_{i}(\theta_i,\theta_i) \ge \tilde{U}_{i}(\theta_i,\underline{\theta}) \ge \tilde{U}_{i}(\underline{\theta},\underline{\theta}),
	\end{align}
	where the first inequality holds as the result of $IC_{\theta_i,\underline{\theta}}$ and the second inequality follows from $\frac{d \tilde{U}_{i}(\cdot)}{d \theta_i} \ge 0$. Thus, it follows $\tilde{U}_{i}(\theta_i,\theta_i) \ge\tilde{U}_{i}(\underline{\theta},\underline{\theta}) $.
Hence, we conclude that $IR_{\underline{\theta}}$ implies $IR_{{\theta_i}}$. To finalize the proof, it is left to demonstrate that $IR_{\underline{\theta}}$ should be binding. If $IR_{\underline{\theta}}$ is not bind, there exists $\epsilon > 0$ which could decrease  $R_i(\theta)$ such that all constraints of \eqref{opt_1} are satisfied and the utility of the CP is also increased.
}
\end{proof}
The previous proposition suggests that we can safely remove all the $IR_{{\theta_i}}$ constraints in which $\theta_i \neq \underline{\theta}$. In other words, the infinite number of inequality constraints of $IR_{{\theta_i}}$ can be converted to a single equality constraint.
Next, we provide the following equivalent formulation of the 
problem \eqref{opt_1}:
\begin{subequations}
	\label{opt_2}
	\begin{align}
\label{cost}
&\max_{\{R(\cdot), x(\cdot)\}} U^{CP}(x(\thetahat),R(\thetahat))\\
s.t.\quad
\label{cons1}
	&\gamma_i ' (\theta_i) \ge 0 
	, \\
\label{cons2}
		&r_{i}({\theta_{i}})=   [\int_{\underline{\theta}}^{\theta_i} \gamma_i  (y) dy] -\theta_i \gamma_i(\theta_i)- V_i(\theta_i).
\end{align}
\end{subequations}
\begin{proposition}
	\label{Lem2}
	Optimizations \eqref{opt_1} and  \eqref{opt_2} are equivalent.
\end{proposition}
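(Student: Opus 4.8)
The plan is to prove that the two problems are equivalent by showing that their feasible sets coincide; since \eqref{opt_1} and \eqref{opt_2} share the identical objective $U^{CP}$, this is all that is required. First I would rewrite the defining constraints in the compact notation of \eqref{taghyir_motoghayer}: taking the expectation over $\hat\theta_{-i}$ in Definitions~\ref{Def:IR} and \ref{Def:IC}, the $IR_{\theta_i}$ constraint becomes $\tilde U_i(\theta_i,\theta_i)\ge 0$ and the $IC_{\theta_i,\hat\theta_i}$ constraint becomes $\tilde U_i(\theta_i,\theta_i)\ge \tilde U_i(\theta_i,\hat\theta_i)$ for all $\hat\theta_i$. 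The crucial structural observation is that $\tilde U_i(\theta_i,\hat\theta_i)=V_i(\hat\theta_i)+\theta_i\gamma_i(\hat\theta_i)+r_i(\hat\theta_i)$ is affine in the true type $\theta_i$ for each fixed report $\hat\theta_i$, so the truthful utility $\tilde U_i(\theta_i,\theta_i)=\max_{\hat\theta_i}\tilde U_i(\theta_i,\hat\theta_i)$ is the upper envelope of a family of affine functions.

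For the forward direction (feasibility in \eqref{opt_1} implies feasibility in \eqref{opt_2}), I would derive the monotonicity and envelope representation in the standard Mirrlees--Myerson manner. Writing the $IC$ inequalities for a pair of types $\theta_i,\theta_i'$ in both directions and adding them yields $(\theta_i-\theta_i')(\gamma_i(\theta_i)-\gamma_i(\theta_i'))\ge 0$, i.e. $\gamma_i$ is nondecreasing, which is \eqref{cons1}. Differentiating the envelope gives $\frac{d}{d\theta_i}\tilde U_i(\theta_i,\theta_i)=\gamma_i(\theta_i)$, so $\tilde U_i(\theta_i,\theta_i)=\tilde U_i(\underline{\theta},\underline{\theta})+\int_{\underline{\theta}}^{\theta_i}\gamma_i(y)\,dy$. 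By Proposition~\ref{Lem1}, $IR$ together with $IC$ forces $\tilde U_i(\underline{\theta},\underline{\theta})=0$, and substituting $\tilde U_i(\theta_i,\theta_i)=V_i(\theta_i)+\theta_i\gamma_i(\theta_i)+r_i(\theta_i)$ and solving for $r_i$ produces exactly \eqref{cons2}.

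For the converse, I would start from \eqref{cons1}--\eqref{cons2}. Evaluating \eqref{cons2} at $\theta_i=\underline{\theta}$ makes the integral vanish and gives $\tilde U_i(\underline{\theta},\underline{\theta})=0$, so $IR_{\underline{\theta}}$ holds. Using \eqref{cons2} to eliminate $r_i$ at both the true type and the report, a direct computation gives $\tilde U_i(\theta_i,\theta_i)-\tilde U_i(\theta_i,\hat\theta_i)=\int_{\hat\theta_i}^{\theta_i}\big(\gamma_i(y)-\gamma_i(\hat\theta_i)\big)\,dy$, which is nonnegative for every $\hat\theta_i$ because $\gamma_i$ is nondecreasing by \eqref{cons1} (the integrand and the direction of integration always carry matching signs). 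This establishes $IC$, and Proposition~\ref{Lem1} then upgrades $IR_{\underline{\theta}}$ to $IR_{\theta_i}$ for every $\theta_i$, closing the equivalence. I expect this sufficiency step---verifying that local monotonicity plus the integral condition implies the global $IC$ inequality against arbitrary, possibly far-away, misreports---to be the main obstacle, since it is precisely where the single-crossing structure encoded by the affinity of $\tilde U_i$ in $\theta_i$ must be exploited in full.
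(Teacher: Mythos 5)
Your proposal is correct and follows essentially the same route as the paper: both directions rest on the standard Mirrlees--Myerson characterization (monotonicity of $\gamma_i$ from summing the two-sided $IC$ inequalities, the envelope/payment identity with the constant pinned down by Proposition~\ref{Lem1}, and sufficiency via $\int_{\hat{\theta_i}}^{\theta_i}\big(\gamma_i(y)-\gamma_i(\hat{\theta_i})\big)\,dy\ge 0$), with your envelope-theorem phrasing being just a cleaner version of the paper's sandwich-and-limit argument for \eqref{cons2}. The only caveat is your opening claim that the feasible sets coincide --- they do not, since Proposition~\ref{Lem1} forces $\tilde{U}_{i}(\underline{\theta},\underline{\theta})=0$ only at \emph{optimal} solutions of \eqref{opt_1} --- but your actual argument correctly establishes the weaker, sufficient statement that solutions map to feasible points of equal objective value in both directions, which is exactly what the paper proves.
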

\begin{proof}
		To show the equivalence between the two optimization problems, it is sufficient to show that for each optimal solution of optimization \eqref{opt_1}, there exists a solution of optimization \eqref{opt_2} with the same objective, and vice versa. 
		First, we show that given a solution to optimization \eqref{opt_1}, we can find a solution to optimization \eqref{opt_2} with the same objective. If Equation \eqref{cons2} is satisfied, 
 we can write $IC_{\theta_i,\hat{\theta_i}}$ as:
	\begin{align*}
	&V_i({\theta_i})+\theta_i\gamma_i({\theta_i})+ [\int_{\underline{\theta}}^{\theta_i}\gamma_i  (y) dy] -\theta_i \gamma_i(\theta_i)- V_i(\theta_i)\ge\\
	\nonumber
	&V_i(\hat{\theta_i})+\theta_i\gamma_i(\hat{\theta_i})+ [\int_{\underline{\theta}}^{\hat{\theta_i}}\gamma_i  (y) dy] -\hat{\theta_i} \gamma_i(\hat{\theta_i})- V_i(\hat{\theta_i}).
	\end{align*}
	Thus, we have:
		\begin{equation}
		\begin{cases}
		\label{IC_3}
\int_{\hat{\theta_i}}^{{\theta_i}}\gamma_i(y)dy \ge
	(\theta-\hat{\theta})\gamma_i(\hat{\theta_i}),&\theta_i > \hat{\theta_i},\\
	(\hat{\theta}-{\theta})\gamma_i(\hat{\theta_i})\ge
	\int_{{\theta_i}}^{\hat{\theta_i}}\gamma_i(y)dy,&\theta_i < \hat{\theta_i}.
	\end{cases}
	\end{equation}
	Regarding \eqref{cons1}, both equations in \eqref{IC_3} hold true.
Next, we show that given an optimal solution to optimization \eqref{opt_1}, we can find a solution to optimization \eqref{opt_2} with the same objective.
As a first step, we prove that truthfulness implies monotonicity of $\gamma_i({\theta_i})$. According to $IC_{\theta_i,\hat{\theta_i}}$ and $IC_{\hat{\theta_i},{\theta_i}}$, we obtain:
	\begin{align}
	\nonumber
	&V_i({\theta_i})+\theta_i\gamma_i({\theta_i})+r_i({\theta_i})\ge V_i(\hat{\theta_i})+\theta_i\gamma_i(\hat{\theta_i})+r_i(\hat{\theta_i}), \\
	&V_i(\hat{\theta_i})+\hat{\theta_i}\gamma_i(\hat{\theta_i})+r_i(\hat{\theta_i})\ge V_i({\theta_i})+\hat{\theta_i}\gamma_i({\theta_i})+r_i({\theta_i}).
	\label{IC_6}
	\end{align}
	By summing these two inequalities of Equation \eqref{IC_6}, we get $\gamma_i({\theta_i})(\theta-\hat{\theta})\ge \gamma_i(\hat{\theta_i})(\theta-\hat{\theta})$,
	which implies monotonicity of $\gamma_i({\theta_i})$ (constraint \eqref{cons1}). To derive Equation \eqref{cons2}, we can rearrange Equation  \eqref{IC_6} as follows:
	\begin{align}
	\label{IC_form}
	& V_i(\hat{\theta_i})+\theta_i\gamma_i(\hat{\theta_i})-V_i({\theta_i})-\theta_i\gamma_i({\theta_i}) \le r_i({\theta_i})-r_i(\hat{\theta_i}) \\
	\nonumber
	& \le V_i(\hat{\theta_i})+\hat{\theta_i}\gamma_i(\hat{\theta_i})-V_i({\theta_i})-\hat{\theta_i}\gamma_i({\theta_i}) .
	\end{align}	
	By considering $\hat{\theta}=\theta+
	\epsilon$ and dividing the entire Equation \eqref{IC_form} by $\epsilon\rightarrow 0$, we get:
	\begin{align}
	\nonumber
	&\theta_i\frac{d}{d\theta_i} \gamma_i({\theta_i}) - \frac{d}{d\theta_i} V_i({\theta_i}) \le \frac{d}{d\theta_i} r_i(\theta_i) \le 
	\theta_i\frac{d}{d\theta_i} \gamma_i({\theta_i}) - \frac{d}{d\theta_i} V_i({\theta_i}).
	\end{align}	
	This results in
	\begin{align}
		\label{IC_form_d2}
	\theta_i\frac{d}{d\theta_i} \gamma_i({\theta_i}) - \frac{d}{d\theta_i} V_i({\theta_i})  = \frac{d}{d\theta_i} r_i(\theta_i).
	\end{align} 
	Integrating Equation \eqref{IC_form_d2} with respect to $\theta_i$ 
	results in:
	\begin{equation}
	\begin{aligned}
	\label{RR}
	r_i(\theta_i)-r_i(\underline{\theta})=& \int_{\underline{\theta}}^{\theta_i}y\frac{d}{d\theta_i}  \gamma_i({\theta_i})\bigg|_{\theta_i=y}+\frac{d}{d\theta_i}  V_i({\theta_i})\bigg|_{\theta_i=y} dy.
	\end{aligned}
	\end{equation}
Using Proposition \ref{Lem1},
	the integration by parts of \eqref{RR} gives us: 
	\begin{align*}
	r_{i}({\theta_i})=   [\int_{\underline{\theta}}^{\theta_i} \gamma_i  (y) dy] -\theta_i \gamma_i(\theta_i)- V_i(\theta_i).
	\end{align*}
\end{proof}
In the following, we derive the optimal solution for the content demand consumption.
Given the fact that $\mathbf{E}[\cdot ]$ is a linear operator, Equation \eqref{eq:utility_CP} can be rewritten as follows:
\begin{align}
\label{UTP_new}
&U^{CP}=\sum_{i\in\mathcal{N}}\mathbf{E}_{\theta_i}[C_i(\theta_i)-r_i(\theta_i)].
\end{align}
By substituting Equation \eqref{cons2} into Equation \eqref{UTP_new}, we obtain:
\begin{align*}
& U^{CP} =  \\
\nonumber
&\sum_{i\in\mathcal{N}} \int_{\underline{\theta}}^{\bar{\theta}} [ C_i(\theta_i)-\int_{\underline{\theta}}^{\theta_i}\gamma_{i}(s)ds +  V_i({\theta_i}) + \theta_i\gamma_i({\theta_i})]f(\theta_i)d\theta_i. 
\end{align*}
We can rewrite the term $\int_{\underline{\theta}}^{\bar{\theta}} \int_{\underline{\theta}}^{\theta_i}[\gamma_{i}(s)ds]f(\theta_i)d\theta_i$ as 
$\int_{\underline{\theta}}^{\bar{\theta}}\gamma_{i}(\theta_i) \frac{1-F(\theta_i) }{f(\theta_i)} f(\theta_i)d\theta_i$,
where $F(\cdot)$ is the cumulative distribution function of $f(\cdot)$.
We define $h(\theta_i) \equiv  \frac{f(\theta_i) }{1-F(\theta_i)}$ and $\phi({\theta_i})\equiv {\theta_i}-1/h({\theta_i})$.
Thus, the optimization \eqref{opt_2} can be rewritten as follows:
\begin{equation}
\begin{aligned}
\max_{  x(\cdot)} & \mathbf{E}_{\theta}\sum_{i\in\mathcal{N}}[C_i(\theta_i)+V_i(\theta_i)+\phi(\theta_i)\gamma_{i} (\theta_i)]\\
& s.t. \; \; \; \; 
\gamma_{i} ' (\theta_i) \ge 0.
\label{opt_3}
\end{aligned}
\end{equation}
We make the following assumptions to guarantee the boundedness of the content demand of each user.
\begin{assumption}
	\label{assump1}
	For each $\theta_i  \in  [\underline{\theta}, \bar{\theta}]$ , $h(\theta_{i})$ is increasing\footnote{\cite{boyd2004convex} showed that $h(\theta_{i})$ is increasing for many common probability density functions, including normal and uniform distributions.} and $\phi({\theta_i}) \ge 0$ \cite{nisanalgorithmic}.
\end{assumption}
\begin{assumption}
	\label{asm1}
	For each $i \in \mathcal{N} $,  $t+b> (\bar{\theta} \sum_{j\in\mathcal{N},j \neq i}(g_{ij}+g_{ji})) $ and $s+a>p$.
\end{assumption}
In the following, we present a lemma that will be useful to characterize the optimal mechanism.
\begin{lemma}
	\label{Lem3}
	Let  us define matrix $K =[(t+b)I_N-(M_{\phi}G+G^TM_{\phi})]^{-1}$, where $M_{\phi} \equiv \text{diag}(\phi (\theta_1),\phi (\theta_2),...,\phi (\theta_N))$, $G = [g_{ij}], i, j \in \mathcal{N}$ and $I_N$ is the $N\times N$ identity matrix.
	Then, $\frac{\partial K}{\partial \theta _i}$ is a matrix with non-negative entries.
\end{lemma}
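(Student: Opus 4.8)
The plan is to differentiate the matrix inverse directly and then establish entrywise non-negativity of each resulting factor. Writing $A \equiv (t+b)I_N - (M_{\phi} G + G^T M_{\phi})$ so that $K = A^{-1}$, I would start from the identity $AK = I_N$ and differentiate it with respect to $\theta_i$ to obtain $\frac{\partial K}{\partial \theta_i} = -K \frac{\partial A}{\partial \theta_i} K$. Since only the $i$-th diagonal entry of $M_{\phi}$ depends on $\theta_i$, we have $\frac{\partial M_{\phi}}{\partial \theta_i} = \phi'(\theta_i) E_{ii}$, where $E_{ii}$ is the matrix with a single $1$ in position $(i,i)$; hence $\frac{\partial A}{\partial \theta_i} = -\phi'(\theta_i)(E_{ii}G + G^T E_{ii})$ and
\begin{equation*}
\frac{\partial K}{\partial \theta_i} = \phi'(\theta_i)\,\big(K E_{ii} G K + K G^T E_{ii} K\big).
\end{equation*}

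The next step is to rewrite the two terms as rank-one outer products. Using $K E_{ii} = K_{\cdot i}\, e_i^T$ (the $i$-th column of $K$ times the $i$-th unit row vector), the two summands become $K_{\cdot i}\,(GK)_{i\cdot}$ and $(KG^T)_{\cdot i}\,K_{i\cdot}$, i.e. outer products of a column and a row taken from $K$, $GK$, or $KG^T$. It therefore suffices to verify three non-negativity facts: (i) $\phi'(\theta_i) \ge 0$; (ii) $K$ is entrywise non-negative; and (iii) since $G$ has non-negative entries, $GK$ and $KG^T$ are then automatically entrywise non-negative. Fact (i) follows from Assumption~\ref{assump1}: because $\phi(\theta_i) = \theta_i - 1/h(\theta_i)$ with $h$ positive and increasing, $\phi'(\theta_i) = 1 + h'(\theta_i)/h(\theta_i)^2 \ge 0$.

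The heart of the argument, and the step I expect to be the main obstacle, is establishing fact (ii): the entrywise non-negativity of $K = A^{-1}$. The plan is to recognize $A$ as a nonsingular M-matrix. Set $B \equiv M_{\phi} G + G^T M_{\phi}$, so that $A = (t+b)I_N - B$. Since $\phi(\theta_k)\ge 0$ by Assumption~\ref{assump1} and $g_{ij}\ge 0$, the matrix $B$ is entrywise non-negative with $B_{ij} = \phi(\theta_i)g_{ij} + g_{ji}\phi(\theta_j)$. Its $i$-th row sum satisfies $\sum_j B_{ij} \le \bar{\theta}\sum_{j\neq i}(g_{ij}+g_{ji})$, because $\phi(\theta_k)\le \theta_k \le \bar{\theta}$ and $g_{ii}=0$ for a graph without self-loops. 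Assumption~\ref{asm1} then yields $\sum_j B_{ij} < t+b$ for every $i$, so the spectral radius obeys $\rho(B) \le \max_i \sum_j B_{ij} < t+b$. Consequently $A = (t+b)\big(I_N - \tfrac{1}{t+b}B\big)$ admits the convergent Neumann expansion $K = A^{-1} = \tfrac{1}{t+b}\sum_{k=0}^{\infty}\big(\tfrac{1}{t+b}B\big)^k$, and every summand is entrywise non-negative because $B\ge 0$ and $t+b>0$; hence $K$ has non-negative entries.

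Combining the three facts, each rank-one term in the expression for $\frac{\partial K}{\partial \theta_i}$ is a non-negative scalar times an outer product of non-negative vectors, so $\frac{\partial K}{\partial \theta_i}$ has non-negative entries, as claimed. The one subtlety I would flag is the use of $g_{ii}=0$ when matching the row-sum bound to the precise form of Assumption~\ref{asm1}; were self-loops permitted, one would instead need to carry the diagonal contribution $2\phi(\theta_i)g_{ii}$ through the bound on $\rho(B)$.
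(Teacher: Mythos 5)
Your proposal is correct and follows the same core computation as the paper: both differentiate $AK = I_N$ to obtain $\frac{\partial K}{\partial \theta_i} = K(E_{i}G + G^T E_{i})K$ with $E_i = \phi'(\theta_i)e_ie_i^T$, and both invoke Assumption~\ref{assump1} to get $\phi'(\theta_i)\ge 0$. The difference is that the paper stops there, asserting that the right-hand side ``is a matrix with the non-negative entries'' --- an assertion that silently presupposes the entrywise non-negativity of $K$ itself, which the paper never establishes. You correctly identify this as the heart of the matter and supply the missing step: recognizing $A=(t+b)I_N - B$ with $B = M_{\phi}G+G^TM_{\phi}\ge 0$ entrywise, bounding $\rho(B)\le \max_i\sum_j B_{ij} < t+b$ via $\phi(\theta_k)\le\bar\theta$ and Assumption~\ref{asm1}, and expanding $K$ as a convergent Neumann series with non-negative terms. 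Your explicit rank-one decomposition $K_{\cdot i}(GK)_{i\cdot}$ and $(KG^T)_{\cdot i}K_{i\cdot}$ then makes the final non-negativity claim transparent. In short, your argument buys a complete proof where the paper's is incomplete; your flag about self-loops ($g_{ii}=0$) is also a legitimate observation, since Assumption~\ref{asm1} only controls the off-diagonal sums and the paper never states that $G$ has zero diagonal.
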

\begin{proof}
	By applying the chain rule, we get:
	\begin{align}
	\label{KK-1}
	0=\frac{\partial KK^{-1}}{\partial \theta _i}=\frac{\partial K}{\partial \theta _i}K^{-1}+K\frac{\partial K^{-1}}{\partial \theta_i}.
	\end{align}
	Furthermore, $	\frac{\partial K^{-1}}{\partial \theta_i}=
	-(E_iG+G^TE_i)	$,
	where $E_i=\frac{\partial M_{\phi} }{\partial \theta _i}$ is a matrix with $\frac{\partial \phi (\theta_i)}{\partial \theta _i}$  at the $ii^{th}$ entry, and zero otherwise. Hence, using Equation \eqref{KK-1} we obtain:
	\begin{align}
	\label{K}
	\frac{\partial K}{\partial \theta_i}=-K\frac{\partial K^{-1}}{\partial \theta_i}K=K(E_iG+G^TE_i)K.
	\end{align}	
Due to Assumption \ref{assump1}, we have $\frac{\partial \phi (\theta_i)}{\partial \theta _i} \ge 0$. Thus,
	since the right-hand side of Equation \eqref{K} is a matrix with the non-negative entries, $\frac{\partial K}{\partial \theta _i}$ is a matrix with the non-negative entries as well.
\end{proof}
The next proposition characterizes the content demand function in terms of the optimal mechanism.
\begin{proposition}
The optimal mechanism defines as 
\begin{align}
\label{optX}
	{x}(\theta)=(s+a-p)[(t+b)I_N-(M_{\phi}G+G^TM_{\phi})]^{-1}\textbf{1}_{N,1},
	\end{align}
	where $\textbf{1}_{N,1}$ is the $N \times 1$ matrix, all of whose elements are equal to $1$.
\end{proposition}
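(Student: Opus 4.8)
The plan is to collapse the nested expectations in the objective of \eqref{opt_3} into a single quadratic functional of the demand profile $x(\theta)$, maximize the resulting problem pointwise in $\theta$ after dropping the monotonicity constraint \eqref{cons1}, and then verify a posteriori that the pointwise optimum satisfies \eqref{cons1}, so that it is in fact feasible and hence optimal.

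First I would substitute the definitions in \eqref{taghyir_motoghayer}. Since $\phi(\theta_i)$ depends on $\theta_i$ only, the tower property gives $\mathbf{E}_{\theta_i}[\phi(\theta_i)\gamma_i(\theta_i)] = \mathbf{E}_{\theta}[\phi(\theta_i)x_i(\theta)\sum_{j\in\mathcal{N}}g_{ij}x_j(\theta)]$, and likewise the $C_i$ and $V_i$ terms recombine into full expectations. Collecting the coefficients of the linear and quadratic parts, the objective of \eqref{opt_3} becomes
\begin{align*}
\mathbf{E}_{\theta}\Big[(s+a-p)\,\textbf{1}_{N,1}^{T}x(\theta)-\tfrac{t+b}{2}\,x(\theta)^{T}x(\theta)+x(\theta)^{T}M_{\phi}G\,x(\theta)\Big],
\end{align*}
where I used $\sum_{i,j}\phi(\theta_i)g_{ij}x_ix_j=x^{T}M_{\phi}Gx$. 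Symmetrizing the last form as $\tfrac12 x^{T}(M_{\phi}G+G^{T}M_{\phi})x$ and writing $A\equiv(t+b)I_N-(M_{\phi}G+G^{T}M_{\phi})=K^{-1}$, the integrand is the quadratic $J_\theta(x)=(s+a-p)\textbf{1}_{N,1}^{T}x-\tfrac12 x^{T}Ax$.

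Because the expectation integrates $J_\theta(x(\theta))$ against the density and the choice of $x(\theta)$ at one realization does not couple to any other once \eqref{cons1} is dropped, the problem decouples and I can maximize $J_\theta$ pointwise. Under Assumption \ref{asm1}, $A$ is symmetric with diagonal $t+b$ (taking $g_{ii}=0$), and since $0\le\phi(\theta_i)\le\bar\theta$ (Assumption \ref{assump1}) its off-diagonal row sums obey $\sum_{j\neq i}\big(\phi(\theta_i)g_{ij}+\phi(\theta_j)g_{ji}\big)\le\bar\theta\sum_{j\neq i}(g_{ij}+g_{ji})<t+b$; hence $A$ is strictly diagonally dominant, positive definite, and a Stieltjes matrix, so $K=A^{-1}$ exists with non-negative entries. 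Positive definiteness makes $J_\theta$ strictly concave, so the stationarity condition $\nabla_x J_\theta=(s+a-p)\textbf{1}_{N,1}-Ax=0$ yields the unique maximizer $x(\theta)=(s+a-p)K\textbf{1}_{N,1}$, which is exactly \eqref{optX}; since $K\ge0$ and $s+a>p$, the demand is non-negative and feasible.

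The main obstacle is the last step: confirming that this unconstrained pointwise optimum satisfies the monotonicity constraint \eqref{cons1} that was relaxed. Here I would invoke Lemma \ref{Lem3}: as $\partial K/\partial\theta_i$ has non-negative entries and $\textbf{1}_{N,1}\ge0$, every component $x_k(\theta)=(s+a-p)(K\textbf{1}_{N,1})_k$ is non-decreasing in $\theta_i$. Combined with $x(\theta)\ge0$ and $g_{ij}\ge0$, the integrand $x_i(\theta)\sum_{j\in\mathcal{N}}g_{ij}x_j(\theta)$ is non-decreasing in $\theta_i$, and taking $\mathbf{E}_{\theta_{-i}}$ preserves this; hence $\gamma_i(\theta_i)$ is non-decreasing and $\gamma_i'(\theta_i)\ge0$. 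Thus the relaxed optimum is feasible for \eqref{opt_3} and therefore optimal, which establishes \eqref{optX}.
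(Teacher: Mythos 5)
Your proof is correct and follows essentially the same route as the paper: relax the monotonicity constraint, maximize pointwise using concavity from strict diagonal dominance of $(t+b)I_N-(M_{\phi}G+G^TM_{\phi})$, solve the first-order condition, and then verify $\gamma_i'\ge 0$ via Lemma~\ref{Lem3}. Your additional observation that $K$ is an inverse M-matrix, so $x(\theta)\ge 0$, makes explicit a non-negativity step the paper leaves implicit, but the argument is otherwise the same.
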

\begin{proof}
	To find the solution of the optimization problem \eqref{opt_3}, we first ignore the monotonicity constraint. 
	 Later, we show that this constraint is indeed satisfied. 
	Let  $\theta_{i} \in [\underline{\theta}, \bar{\theta}]$ be fixed and given. Hence, ${\{ x(\theta)\}}$ solves the following problem:
	\begin{align}
	\label{opt_4}
	\max_{  x(\theta)} &\sum_{i\in\mathcal{N}}[C_i(\theta_i)+V_i(\theta_i)+\phi(\theta_i)\gamma_{i} (\theta_i)]
	\end{align}
	The Hessian of the objective in optimization \eqref{opt_4} is given as:
	\begin{align*}
	\begin{pmatrix}
	-t-b & \cdots &\phi (\theta_1)g_{1N}+\phi (\theta_N)g_{N1}  \\
	\vdots & \ddots & \vdots\\
	\phi (\theta_1)g_{1N}+\phi (\theta_N)g_{N1}   & \dots &-t-b
	\end{pmatrix}
	\end{align*}
	Considering Assumption \ref{asm1}, this matrix is Hermitian and strictly diagonally dominant. Thus, it is negative semi-definite \cite{horn2012matrix}. Therefore, the objective of optimization \eqref{opt_4} is concave.
	The first order optimality condition of optimization \eqref{opt_4} yields:
	\begin{align}
	\label{frst_ordr}
	&(s+a-p)-(t+b)x_i(\theta)+\phi (\theta_i)\sum_{j\in\mathcal{N}}(g_{ij}x_j(\theta))\\
	\nonumber
	&+\sum_{j\in\mathcal{N}}(\phi (\theta_j)g_{ji}x_j(\theta))=0.
	\end{align}
Equation \eqref{frst_ordr} can be written in the matrix form as follows:
	\begin{align*}
	&(s+a-p){\textbf{1}}_{N,1}+(M_{\phi}G+G^TM_{\phi}){x}(\theta)=(t+b){x}(\theta).
	\end{align*}
	Since $[(t+b)I_N-(M_{\phi}G+G^TM_{\phi})]$ is a strictly diagonally dominant matrix, it is invertible. Hence,
	\begin{align*}
	&{x}(\theta)=(s+a-p)[(t+b)I_N-(M_{\phi}G+G^TM_{\phi})]^{-1}\textbf{1}_{N,1}.
	\end{align*}
	To finalize the proof, it is left to show that $ \gamma_{i}'=\frac{\partial \gamma_{i}(\theta_i)}{\partial \theta_i} \ge 0 $.
Given the definition of $ \gamma_{i}$ in Equation \eqref{taghyir_motoghayer}, we obtain:
	\begin{align}
\label{gamma1}
	\frac{\partial \gamma_i  (\theta_i)}{\partial \theta _i}=
	E_{\theta_{-i}}&[\frac{\partial x_i(\theta_i,\theta_{-i})}{\partial \theta_i}\sum_{j\in\mathcal{N}} g_{ij}x_j(\theta_i,\theta_{-i}) +  \\
	\nonumber
	&x_i(\theta_i,\theta_{-i})\sum_{j\in\mathcal{N}} g_{ij}\frac{\partial x_j(\theta_i,\theta_{-i}) }{\partial \theta _i}]\ge 0. 
	\end{align}
		According to Lemma \ref{Lem3}, we have $\frac{\partial x_i(\theta_i)}{\partial \theta_i} \ge 0$ and $\frac{\partial x_j(\theta_i)}{\partial \theta_i} \ge 0$. Thus, by considering Lemma \ref{Lem3} and Equation \eqref{gamma1}, we can conclude that \textbf{$ \gamma_{i}' \ge 0 $.}
\end{proof}
{
\begin{remark}
\label{rmrk1}
The calculation of the optimal content demand from Equation \eqref{optX} has a complexity of $O(N^3)$, where N is the number of users in the network.
\end{remark}
}
\section{Case study}
In this section, we investigate the performance of our proposed mechanism in a case study and evaluate the effect of the network structure on the utility of the users and the CP. We set $s=1$, $t=1$, $a=0.5$, $b=6$, and $p=0.1$ as the parameters of the CP and the users. In addition, we assume  $g_{ij}=1 $ when agents $i$ and $j$ are connected, otherwise $g_{ij}=0 $. 

In the first step, we assess the validity of individual rationality and incentive compatibility of the proposed mechanism on a simple example. We consider a case with five users who can communicate through a fully connected graph (Figure \ref{fig:A}). In this network, four users share their private information 
truthfully and User $5$ can share his private information untruthfully. The utility of User $5$ when he shares values of private information different from those of his actual private information is shown in Figure \ref{fig:IC}. Each curve in Figure \ref{fig:IC} corresponds to the different values of actual private information of User $5$, while the private information of the other four agents is fixed in all the curves.
This figure shows that User $5$ gains his maximum utility when he shares his private information truthfully, which is marked by the black stars on the curves. Therefore, the mechanism satisfies the incentive compatibility constraint. 
In addition, the utility of User $5$ when he shares his information truthfully is positive, which implies that the individual rationality is satisfied as well.
\begin{figure}
     \centering
     \begin{subfigure}[b]{0.2\textwidth}
         \centering
         \includegraphics[width=1\textwidth]{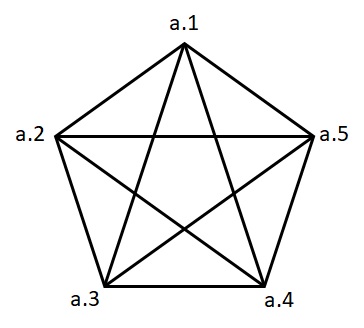}
         \caption{Fully connected}
         \label{fig:A}
     \end{subfigure}
     \begin{subfigure}[b]{0.2\textwidth}
         \centering
         \includegraphics[width=1\textwidth]{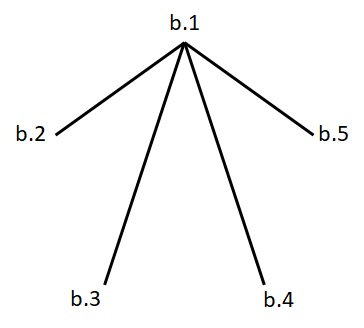}
         \caption{Star }
         \label{fig:B}
     \end{subfigure}
        \caption{The bidirectional network structure between different users.}
        \label{fig:graphs}
\end{figure}
\begin{figure}[h!]
	\centering
	\includegraphics[width=1\linewidth]{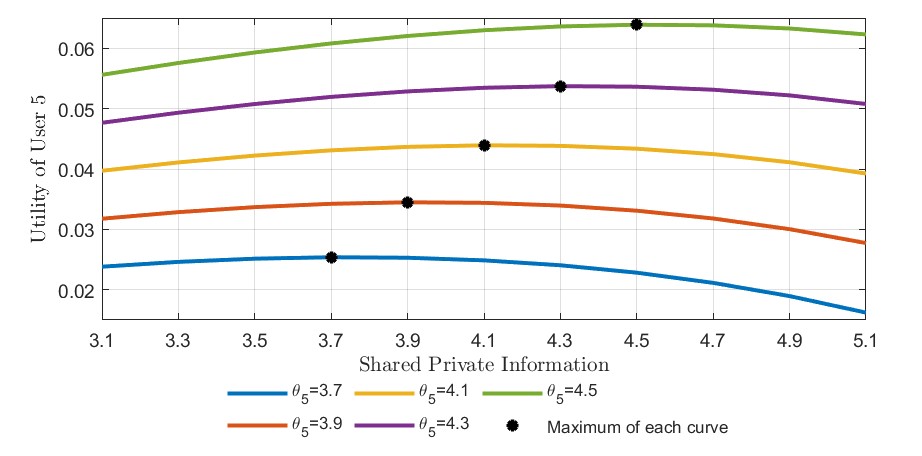}
	\caption{Utility of User $5$  when sharing different levels of private information.
Each curve corresponds to the different levels of the actual private information (i.e. $\theta_{5}$) of User $5$. 
	}
	\label{fig:IC}
\end{figure}

In the next step, we evaluate the impact of the network architecture on the utility  
of the users. 
We consider three kinds of users, including a fully connected, a branch, and a central user. Two types of networks are considered for this evaluation: a fully connected network (Figure \ref{fig:A}) and a star network (Figure \ref{fig:B}). 
We assume that the users in both of these networks have the same parameters and private information and differ only in terms of their connectivity. Thus, the  utilities of all the agents in Figure \ref{fig:A} are the same. In this case, to investigate the behavior of fully connected users, we consider as an example the utility of user a.1, which is the same as all other agents in Figure \ref{fig:A}. In addition, since users $b.2,b.3,b.4$, and $b.5$ have the same parameters as well as the same connection, we only consider the utility of user b.2 in order to investigate the attributes of branch-users. Also, we consider agent b.1 to be the central agent.
 As Figure \ref{fig:compare} demonstrates, fully connected users have the highest utility compared to the other two types of users since the mutual influence between the users in such a network is the highest. 
 Moreover, the utility of the central user is higher than that of the branch user in the star network since the central user has a greater mutual influence on other users in this network as compared to the branch user. Hence, we can conclude that the greater the influence of the user on other users, the more content that is allocated to that user (and, consequently, more utility). Please note that we also evaluated the strength of the connectivity between the users. All users are assumed to have the same strength of connectivity. 
\begin{figure}
         \includegraphics[width=0.5\textwidth]{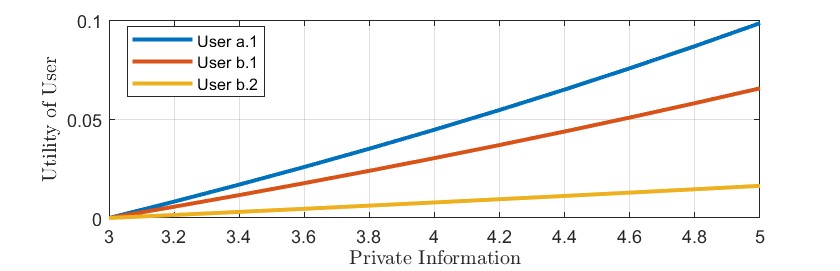}
        \caption{Utility of Users a.1 (fully connected user), b.1 (central user), and b.2 (branch user).}
        \label{fig:compare}
\end{figure}

In the third step, we investigate the impact on the CP's utility of a case in which a user shares his private information untruthfully. The bidirectional network structure for this case study is depicted in Figure \ref{fig:graph1}. The utility of the CP when one user shares his private information untruthfully is displayed in Table \ref{tab:table1}. The result when all users share their information truthfully is displayed in the first row. 
In the other rows, it is assumed that only one user shares the information untruthfully. As we can see, the untruthfully shared private information of User $1$, who is connected to all other users, has the highest impact on the CP's utility. In contrast, Users $2$ and $5$, who only connect to one user, have the lowest effect on the CP's utility. Hence, we conclude that the CP should prioritize designing the mechanism such that it guarantees incentive compatibility for the user with the greatest influence on others.

\begin{figure}[!ht]
	\centering
	\includegraphics[width=0.4\linewidth]{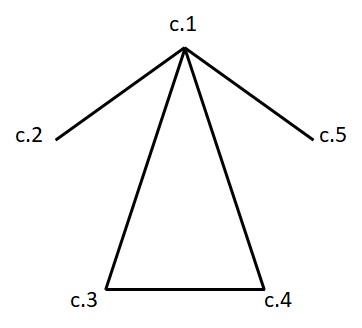}
	\caption{The bidirectional network structure between different users.}
	\label{fig:graph1}
\end{figure}

\begin{table}
	\caption{ 
		Impact of an untruthful user on the utility of CP}
	\centering
	\begin{tabular}[b]{c c}
		\hline
		Untruthful user  & Utility of CP  \\
		\hline
		No of the user & 83.37\\
		User c.1 & 27.23\\ 
		User c.2  & 64.93\\
		User c.3  & 38.84\\
		User c.4  & 38.84\\
		User c.5  & 64.93\\
		\hline
	\end{tabular}
	\label{tab:table1}
\end{table}
{

In the last evaluation step, the scalability and robustness of the proposed algorithm are demonstrated through tests on various network sizes, including large networks with many users. In each network, it is assumed that users connect randomly to half of the total users. For simplicity, we also assume that the users in all networks have the same parameters, and the same strength of connectivity. 
With a maximum network size of 800 users, each user is connected to 400 users, which is similar to the average number of friends per user on Facebook as reported by Pew Research Center \cite{Pew}. The impact of the number of users on the calculation time for optimal content demand is displayed in Table \ref{tab:table2} . These results support our statement in Remark \ref{rmrk1}. Figure \ref{fig:compareN} displays the comparison of the utility of the users in networks with different numbers of users. As shown in this figure, the user in a network with more nodes has the highest utility compared to users in networks with fewer nodes. This is due to the reason that as the number of users in the network increases, the users influence more users.

\begin{table}
	\caption{ {
		The impact of the number of users on the calculation time for optimal content demand }
		}
	\centering
	\begin{tabular}[b]{c c}
		\hline
		Number of users  & Time (sec) \\
		\hline
		10 & 0.019\\
		20 & 0.022\\ 
		50  & 0.037\\
		100  & 0.054\\
		200  & 0.146\\
		400  & 0.707\\
		600  & 1.685\\
		800 & 3.653\\
		\hline
	\end{tabular}
	\label{tab:table2}
\end{table}

\begin{figure}
         \includegraphics[width=0.5\textwidth]{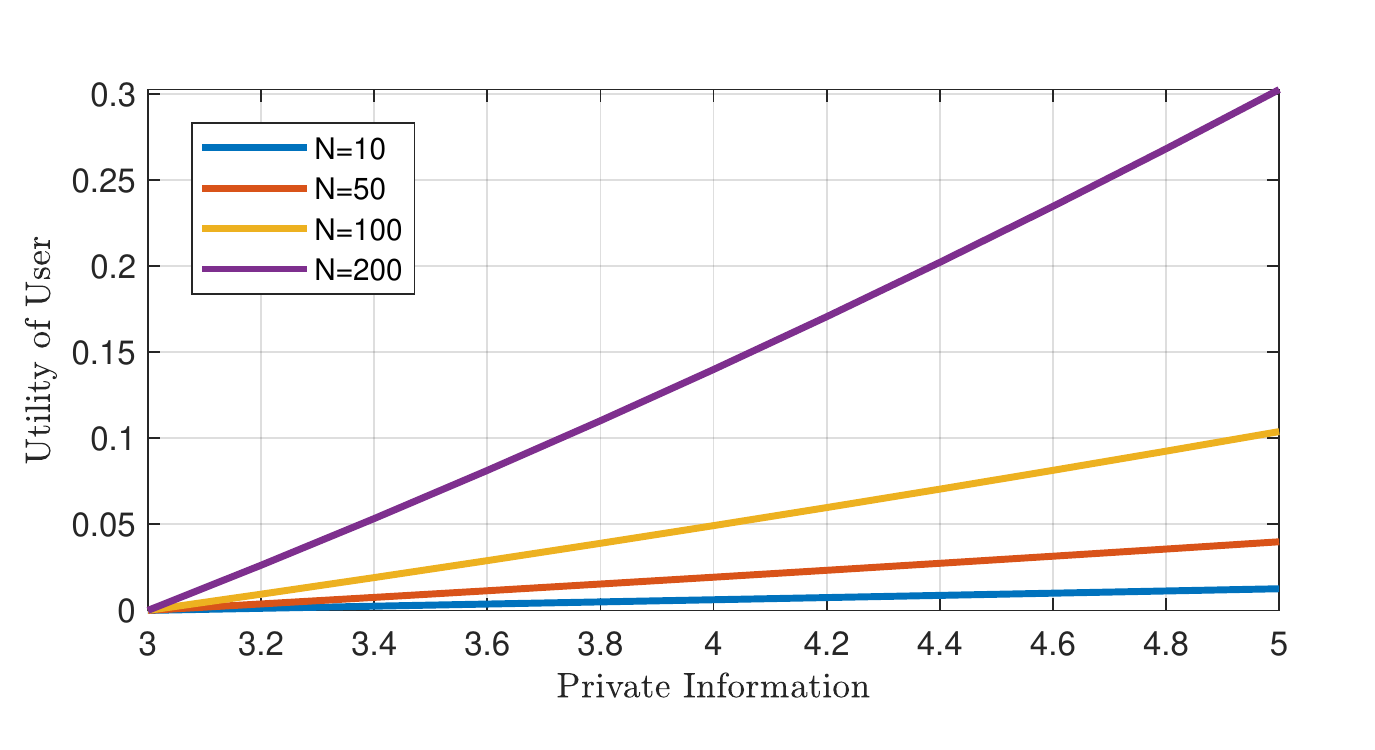}
        \caption{{Utility of users in networks with different numbers of users.}}
        \label{fig:compareN}
\end{figure}

}


\section{Conclusion}
In this paper, we proposed an incentive mechanism for the sponsored content market by considering the social interaction of the agents. The strength of the social network ties of each user is considered to be his private information. We formulate the problem as the constrained functional optimization, which obtains the content demand and incentive reward functions as the CP's decisions. The optimal decisions satisfy incentive compatibility and individual rationality. The results in the case study verify the effectiveness and  scalability of the proposed mechanism. 

A future research direction would be to formulate the market as a mechanism whereby the users decide about their content demand and the content provider obtains the incentive reward.

\ifCLASSOPTIONcaptionsoff
  \newpage
\fi



\bibliographystyle{IEEEtran}
%
\bibliography{main}
\end{document}